\newtheorem{theorem}{Theorem}[section]
\begin{document}

\title[Network Science]{Spreaders in the Network SIR Model:\\ An Empirical Study}

\author[Macdonald et al.]{Brian Macdonald$^{1,2,5}$, Paulo Shakarian$^{1,3,5}$, Nicholas Howard$^{1,2}$, \& Geoffrey Moores$^{1,3,4}$\\
$^1$Network Science Center, United States Military Academy, West Point, NY 10996\\
$^2$Department of Mathematical Sciences, United States Military Academy, West Point, NY 10996\\
$^3$Department of Electrical Engineering and Computer Science, United States Military Academy, West Point, NY 10996\\
$^4$Department of Physics, United States Military Academy, West Point, NY 10996\\
$^5$These authors contibuted equally to this work.\\
\email{brian.macdonald@usma.edu,paulo@shakarian.net,nicholas.howard@usma.edu,geoffrey.moores@usma.edu}
}

\date{\today}

\maketitle

\begin{abstract}
We use the susceptible-infected-recovered (SIR) model for disease spread over a network, and empirically study how well various centrality measures perform at identifying which nodes in a network will be the best spreaders of disease on $10$ real-world networks. We find that the relative performance of degree, shell number and other centrality measures can be sensitive to $\beta$, the probability that an infected node will transmit the disease to a susceptible node.  We also find that eigenvector centrality performs very well in general for values of $\beta$ above the epidemic threshold.
\end{abstract}

\section{Introduction}
		The susceptible-infected-recovered (SIR) model, first introduced in \cite{anderson79} is a popular model for disease spread.  In recent years, this model has been applied to social networks - situations where the interactions of individuals are modeled as a graph.  A key problem relating to this model when considering a network structure is how to identify the nodes that, if initially infected, will result in the greatest portion of the population (in expectation) also becoming infected.  These nodes are often referred to as ``spreaders.''  Unfortunately, a modification of the proof of a related problem in \cite{chen10} shows that exactly computing the expected number of infected individuals in a networked-structured population given a single initial infectee is $\#P$-hard.  This implies that solving this problem exactly is likely beyond the ability of today's computer systems.  However, the literature on complex networks has provided various centrality measures that can be used as heuristics.  So, inspired by the work of  
		\cite{InfluentialSpreaders_2010}, which empirically examines the use of degree, betweenness, and shell number for identifying spreaders, we conduct a comprehensive evaluation of $10$ different centrality measures on $10$ real-world social network data-sets from various domains (e-mail, disease spread, blogging, power, autonomous system, and collaboration).  The major contributions of our work are two-fold.  First, we show that the ability of a centrality measure to identify spreaders in the SIR model can be sensitive to the $\beta$ parameter, the probability of infection.  Second, we find that, in general, eigenvector centrality performs very well for values of $\beta$ above the epidemic threshold.
   
    With respect to our first major contribution, we carefully selected the $\beta$ parameter based on $\beta'$, the \textit{epidemic threshold} of the network.  We can be sure that a contagion can spread to a significant portion of the network for $\beta > \beta'$, and we studied a variety of different values for $\beta$ above this threshold.
    
    In Figure \ref{cond-mat-1} and \ref{cond-mat-2},  we give an example of a network where shell number outperforms degree for one value of $\beta$, but degree outperforms shell number for another value of $\beta.$  In Section \ref{results}, we give additional examples illustrating that the imprecision functions of other centrality measures, as well as the choice of the ``best'' centrality measure, can be sensitive to $\beta$ as well.

   \begin{figure}
        \begin{center}
            \includegraphics[width=0.8\linewidth]{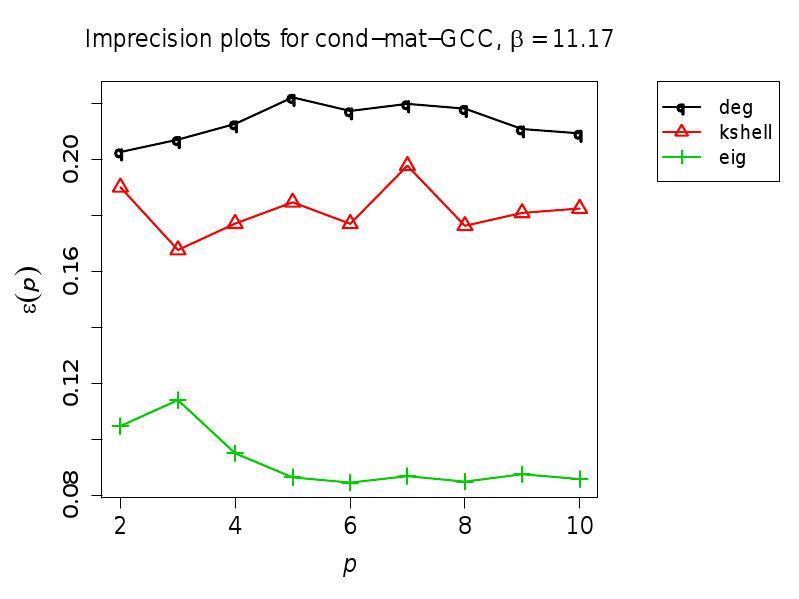}
        \end{center}
    \caption{ Imprecision versus $p$ 
    for the cond-mat network with $\beta=11.17$.  Notice that for this $\beta$, $k$-shell has a lower imprecision, meaning that $k$-shell outperforms degree. See Section \ref{defs} for the definitions of imprecision function and $p$.
}
        \label{cond-mat-1}
    \end{figure}
        \begin{figure}
            \begin{center}
                \includegraphics[width=0.8\linewidth]{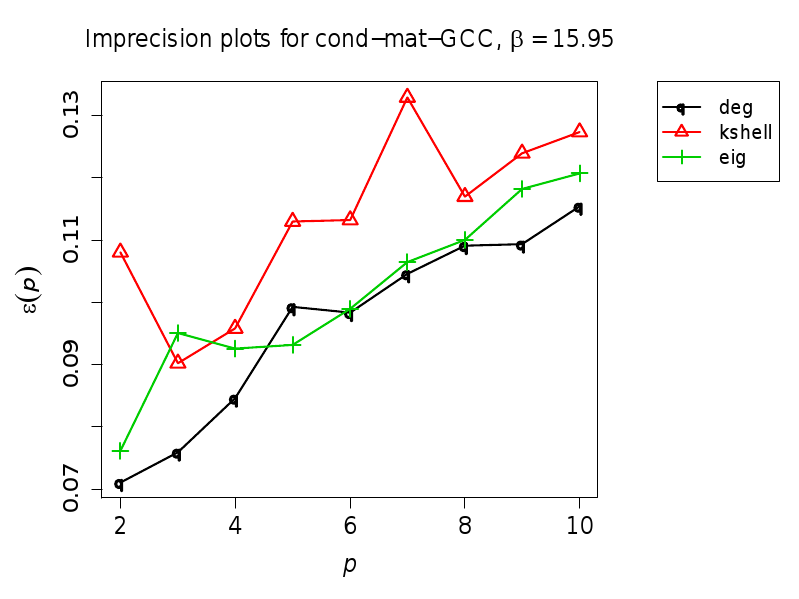}
            \end{center}
        \caption{Imprecision plots vs. $p$ for the cond-mat network with $\beta=15.95$.  Notice that for this $\beta$, degree has a lower imprecision, meaning that degree outperforms $k$-shell, the opposite of what we saw in Figure \ref{cond-mat-1}.}
            \label{cond-mat-2}
        \end{figure}

    As for our second major contribution, we found that eigenvector centrality consistently outperformed all other measures considered, including both shell number and degree (which were considered by Kitsak et al.), in all but one of the networks examined.  See Figure \ref{eig-kshell} for a comparison of $k$-shell (the best performing centrality measure of Kitsak et al.) with eigenvector centrality.  Also, if we average over all of our networks, including the one where eigenvector was not the best, we find that, on average, eigenvector centrality outperforms the other measures.

      \begin{figure}
                \begin{center}
                    \includegraphics[width=0.8\linewidth]{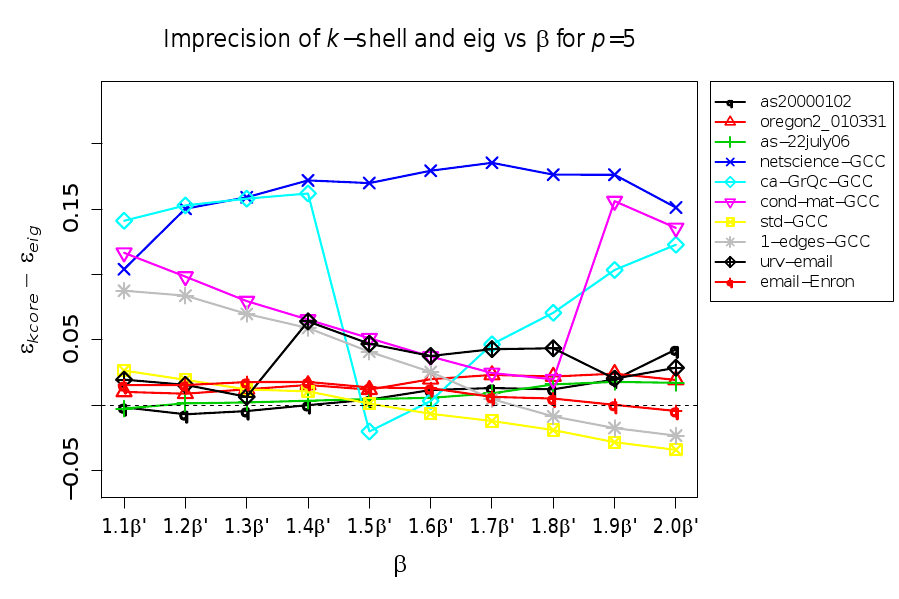}
                \end{center}
            \caption{Imprecision of $k$-shell minus the imprecision of eigenvector centrality.  Positive values indicate that $k$-shell has a higher imprecision than eigenvector centrality, which means that eigenvector centrality typically outperforms $k$-shell.}
                \label{eig-kshell}
            \end{figure}
    
The rest of this paper is organized as follows.  In Section \ref{sir}, we review the SIR model, discuss how the $\#P$-hardness proof of \cite{chen10} applies to this model, and describe how we calculate the epidemic threshold of a given complex network.  This is followed by a discussion of the various centrality measures we considered in Section~\ref{defs} along with a review of the description of the ``imprecision function''~\cite{InfluentialSpreaders_2010} used to measure the effectiveness of a centrality measure in identifying the top spreaders in a network.  We describe our experimental setup and datasets in Section~\ref{data} and give a description and discussion of the experimental results in Section~\ref{results}.

\section{The SIR Model}\label{sir}

As in~\cite{InfluentialSpreaders_2010}, we consider the classic susceptible-infected-recovered (SIR) model of disease spread introduced in \cite{anderson79}. In this model, all nodes in the network are in one of three states: susceptible (able to be infected), infected, or recovered (no longer able to infect or be infected). At each time step, any node infected in the last time step can infect any of its neighbors who are in a susceptible state with a probability $\beta$. After that time step, any node previously in an infected state moves into a recovered state and is no longer able to infect or be infected.

\subsection{Complexity}

In \cite{libai} and \cite{kleinberg}, the authors present a generalization of the SIR model known as the \textit{independent cascade} (IC) model.  In this model, the $\beta$ parameter can be different for each edge in the network.  They define the \textit{influence spread} of a set of nodes as the expected number of individuals in the population infected under the IC model given that the set was initially infected.  In \cite{chen10} this problem was shown to be $\#P$-hard.  Here we reconsider their proof, with some modification, to identify the influence spread of single node under the SIR model.

\begin{theorem}
\label{cmplx-thm}
Calculating the influence spread of a single node under the SIR models is $\#P$-hard.
\end{theorem}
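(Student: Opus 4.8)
The plan is to adapt the $\#P$-hardness reduction of \cite{chen10} essentially verbatim, after first observing that the network SIR model and the independent cascade (IC) model induce the \emph{same} distribution on the set of eventually-infected nodes whenever all per-edge infection probabilities in the IC instance equal the common SIR parameter $\beta$. Granting that equivalence, the expected influence spread of a single seed is the same number in both models, so the existing reduction transfers immediately, and the theorem follows by composing it with this observation.

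In more detail, I would recall the skeleton of \cite{chen10}: they reduce from a $\#P$-hard graph-counting problem of network-reliability type (computing, e.g., the probability that a source $s$ reaches a sink $t$ in a random subgraph where each edge survives independently with a fixed probability $p$), build an IC instance with a single seed, and use a small auxiliary gadget to extract the single reachability probability for $t$ from the total expected spread $\sum_v \Pr[v]$, where $\Pr[v]$ denotes the probability that $v$ is eventually infected. Crucially, the hard instances of the source problem can be taken with \emph{uniform} edge probability (reliability is already $\#P$-hard for $p = 1/2$), and any edge the construction wants present with probability $1$ can be contracted or replaced by a short deterministic gadget; so the IC instance we build has all edge probabilities equal, and we simply set $\beta$ to that common value.

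The heart of the argument is the IC--SIR equivalence, which I would prove through the ``live-edge'' (percolation) description shared by both models. In IC with every edge live independently with probability $\beta$, node $v$ is eventually infected iff the seed reaches $v$ along a path of live edges. In the network SIR model, an infected node gets exactly one opportunity---the single time step just after it becomes infected---to transmit along each incident edge, independently with probability $\beta$, and then recovers; hence each edge is ``used'' at most once, with success probability $\beta$, and $v$ is eventually infected iff the seed reaches $v$ through successful transmissions. These are the same event with the same probability; the infection time in SIR equals the live-edge distance from the seed, but the identity of the eventually-infected set---and therefore the expected influence spread---does not depend on timing. I would make this rigorous with a coupling that identifies, edge by edge, the IC coin for that edge with the SIR transmission coin used on it, together with an induction on infection time. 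Finally, if the SIR model is taken on undirected graphs while \cite{chen10} uses directed graphs, I would either route each directed edge through the standard two-node one-way gadget or simply invoke that two-terminal reliability is $\#P$-hard for undirected graphs as well. Putting the pieces together, an oracle for the single-node influence spread under SIR computes the $\#P$-hard reliability count, proving the theorem.

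The main obstacle is the equivalence step: one must argue carefully that the discrete-time recovery dynamics of SIR---where a node that is ``reached late'' might find its would-be infector already recovered---never change \emph{which} nodes end up infected, only \emph{when}. The coupling makes this precise, but it is the one place where an SIR-specific feature (as opposed to a generic IC instance) could in principle interfere, so it deserves the most care; the remaining bookkeeping is a routine adaptation of \cite{chen10}.
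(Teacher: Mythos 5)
Your proposal is correct and follows essentially the same route as the paper: both adapt the reduction of \cite{chen10} from two-terminal reliability / $s$-$t$ connectivity at uniform edge probability $1/2$, attach an auxiliary node $t'$ to $t$, and recover the reachability probability as the difference of two single-seed influence spreads. The only difference is one of emphasis---you spell out the live-edge coupling showing that SIR with uniform $\beta$ and IC with uniform edge probabilities yield the same distribution over the eventually-infected set, a step the paper's proof uses implicitly without stating.
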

\begin{proof}
We prove this theorem by showing a reduction from the known $\#P$-complete problem $s-t$ connectivity~\cite{valiant79}.  Let $G=(V,E)$ be a directed graph, where $V$ denotes the set of vertices, and $E$ denotes the set of edges.  Given two vertices $s,t \in V$, the goal is to determine the number of subgraphs of $G$ where $s$ is connected to $t$.  In \cite{chen10}, the authors point out that it is easy to see that this is equivalent to computing the probability that $s$ is connected to $t$ when each edge in $G$ has an independent probability of $0.5$ to be connected (and $0.5$ to be disconnected).  Hence, to embed the $s-t$ connectivity problem into the influence spread on the SIR model, we first calculate $M_s$, the expected number of infectees given initially infected node $s$ with $\beta=50$.  We then create $G'$ which is equivalent to $G$ but has an additional directed edge from $t$ to a new node $t'$.  Let $M_s'$ be the influence spread when we consider graph $G'$.  If $p(s,t,G)$ is the probability that $t$ is influenced by $s$ in $G$ (hence the solution to the $s-t$ connectivity problem) then $M_s' = M_s + p(s,t,G)\cdot\frac{\beta}{100}$.  Therefore, the solution to the $s-t$ connectivity problem can easily be obtained in polynomial time if we can efficiently find a solution to the influence spread problem under the SIR model.
\end{proof}

Theorem~\ref{cmplx-thm} tells us that exact methods for identifying the influence spread of individual nodes under the SIR models is likely not possible with today's computer systems.  Further, as $s-t$ connectivity has no known efficient approximation algorithm with a guarantee of accuracy, an approximation scheme for influence spread also seems unlikely.  Hence, much work on influence spread such as \cite{kleinberg} relies on estimating influence spread using simulation, which is often expensive computationally and even impractical for very large networks.  Therefore, in this paper, we look to evaluate various centrality measures from the literature as heuristics to identify spreaders under the SIR model.  We describe these centrality measures in Section~\ref{defs}.  Note that the centrality measures are not specifically designed to calculate influence spread under the SIR model, and they do not account for the infection probability $\beta$.  In the next section, we describe how we select the different $\beta$ parameters for the model in our experiments.

\subsection{Selecting the Infection Probability}

We note that for scale-free networks, having degree distribution $P(k) \sim k^{-\gamma}$, the literature shows that for $\gamma\leq 3$, the epidemic threshold of $\beta$ approaches $0$ as the number of nodes goes to infinity \cite{callaway00,cohen00}.  However, the networks we examine are of finite size and have various levels of ``scale-freeness'', based on the $R^2$ value of the linear correlation of a log-log plot of the degree distribution (see Section~\ref{data} for details). Instead, we explored $\beta$ values based on the epidemic threshold calculation in \cite{madar04}.  Using this method, the SIR model is mapped onto a bond percolation process.  Assuming a randomly connected network, the average number of influenced neighbors, $\langle n \rangle$ can be written
    \begin{equation}
    \langle n \rangle = \beta \cdot \sum\limits_k \frac{P(k)\cdot k \cdot (k-1)}{\langle k \rangle},
    \label{threshold}
    \end{equation}
where $k$ is the degree of a node, $P(k)$ is the probability of a node having degree $k$, and $\langle k \rangle$ is the average degree. Since an epidemic state can only be reached when $\langle n \rangle >1$, and from (\ref{threshold}) we have
\begin{equation}
\beta > \left( \sum\limits_k \frac{P(k)\cdot k \cdot (k-1)}{\langle k \rangle} \right)^{-1} = \beta'.
\end{equation}

We note that there is some work discussing the effect of different infection probabilities on spreading in \cite{InfluentialSpreaders_2010} and more recent and comprehensive study on the topic in \cite{castellano12}.  These works consider the effect of this parameter with respect to degree and shell decomposition (and betweenness in \cite{InfluentialSpreaders_2010}).  Here we consider these and many other centrality measures, and find that some of them, such as eigenvector centrality, outperform those in these previous works.

\section{Centrality Measures}\label{defs}
We now describe the centrality measures that we examine in our experiments.  We note that the major centrality measures in the literature can be classified as either radial (the quantity of certain paths originating from the node) or medial (the quantity of certain paths passing through the node) as done in Borgatti and Everett \cite{borgatti06}.  Based on the negative result concerning betweenness of \cite{InfluentialSpreaders_2010} and the intuitive association between high-radial nodes and spreading, we focused our efforts on radial measures.  While the work of \cite{InfluentialSpreaders_2010} compares shell number to degree and betweenness, we consider several other well-known radial measures in addition to degree, including closeness and eigenvector centrality.  As done in \cite{InfluentialSpreaders_2010}, we also develop ``imprecision functions'' for these centrality measures.

\subsection{Degree Centrality}
Of all the measures that we are examining, degree is perhaps the most simplistic measure - simply the total of incident edges for a given node.  As noted throughout the literature, such as \cite{wasserman1994social}, it is perhaps the easiest centrality measure to compute.  Further, in other diffusion processes, such as the voter model on undirected networks in~\cite{antal06}, it has been shown to be proportional to the expected number of individuals becoming infected\footnote{Technically, the work of \cite{antal06} proves that the fixation probability for a single mutant invader is proportional to the degree of that node.  However, the expected number of mutants, in the limit as time goes to infinity, can simply be computed by multiplying fixation probability by the number of nodes in thee network.}.  As pointed out in \cite{borgatti06}, degree is a radial measure as it is the number of paths starting from a node of length $1$.  Degree is one of three measures considered in \cite{InfluentialSpreaders_2010}.

\subsection{Shell Number}

The other radial measure considered in \cite{InfluentialSpreaders_2010}, shell number, or ``$k$-shell number'', is determined using shell decomposition \cite{Seidman83}.  High shell-number nodes in the network are often referred to as the ``core'' and are regarded by \cite{InfluentialSpreaders_2010} as influential spreaders under the SIR model.  Our results described later in the paper confirm this finding, although we also show that $k$-shell number was generally outperformed by eigenvector centrality. 
There have also been some more practical applications of this technique to find key nodes in a network.  For instance, \cite{borge12a, borge12} uses shell-decomposition to find individuals likely to initiate information cascades in an online social network while \cite{ShaiCarmi07032007} uses it to identify key nodes in a subset of autonomous systems on the Internet.

 An example of this process is shown in Figure~\ref{kshelldecomp}.  Given graph $G=(V,E)$, shell decomposition partitions a graph into shells and is described in the algorithm below.\\
\begin{algorithmic}
\State  Let $k_i$ be the degree of node $i$.  Set $S=1$.  Let $V_S$ denote the first shell of $G$.
\While{$|V|>0$} 
    
        \While{There exists $i$ such that $k_i=S$}
            \State Remove all $i \in V$ where $k_i=S$; 
            \State Also, remove all corresponding adjacent edges. 
            \State Place removed nodes into shell $V_S$.
        \EndWhile 
     \State $S$++
\EndWhile
\end{algorithmic}

\begin{figure} 
\begin{center}
\includegraphics[width=0.8\linewidth]{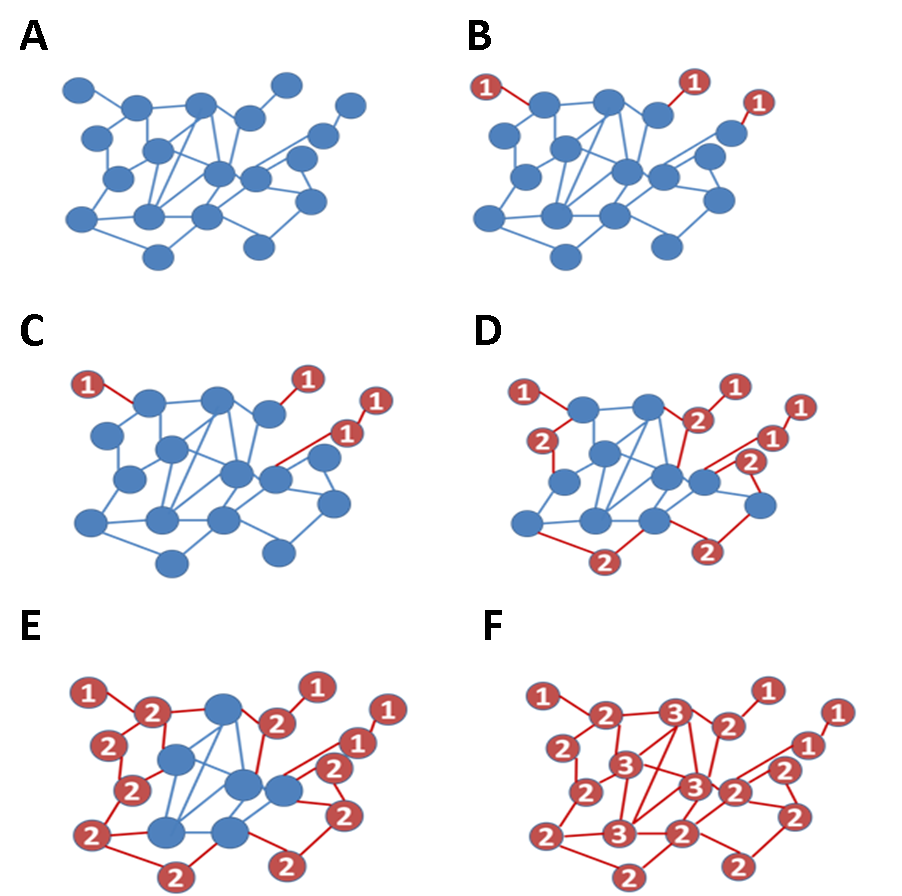}\\
\caption{\textit{Consider the progression of the graph above, where the elimination of nodes with degree 1 occurs in B and C.  D represents the first iteration for the second shell, and E represents the complete second shell (as well as the first).  F finalizes the decomposition with the third shell.}}
\label{kshelldecomp}
\end{center}
\end{figure}

\subsection{Betweenness Centrality}

The intuition behind high betweenness centrality nodes is that they function as ``bottlenecks'' as many paths in the network pass through them.  Hence, betweenness is a medial centrality measure.  Let $\sigma_{st}$ be the number of shortest paths between nodes $s$ and $t$ and $\sigma_{st}(v)$ be the number of shortest paths between $s$ and $t$ containing node $v$.  In \cite{freeman77}, betweenness centrality for node $v$ is defined as $\sum_{s\neq v \neq t}\frac{\sigma_{st}(v)}{\sigma_{st}}$.  In most implementations, including the ones used in this paper, the algorithm of \cite{brandes01} is used to calculate betweenness centrality.

\subsection{Closeness Centrality}

Another common measure from the literature that we examined is closeness ~\cite{freeman79cent}.  Given node $i$, its closeness $C_c(i)$ is the inverse of the average shortest path length from node $i$ to all other nodes in the graph. Intuitively, closeness measures how ``close'' it is to all other nodes in a graph.

	  Formally, if we define the shortest path between nodes $i$ to $j$ as function $d_G(i,j)$, we can express the average path length from $i$ to all other nodes as 
\begin{equation}
L_i = \frac{\sum_{j\in V\setminus i} d_G(i,j)}{|V|-1}.
\end{equation}

Hence, the closeness of a node can be formally written as 
\begin{equation}
C_c  (i) = \frac{1}{L_i} = \frac{|V|-1}{\sum_{j\in V\setminus i} d_G(i,j)}.
\end{equation}

\subsection{Eigenvector Centrality}

	The use of the principle eigenvector of the adjacency matrix of a network was first proposed as a centrality measure in \cite{bona72}.  Hence, the intuition behind eigenvector centrality is that it measures the influence of a node based on the sum of the influences of its adjacent nodes.  Given a network $V=(G,E)$ with adjacency matrix $A = (a_{ij})$, where $a_{ij} = 1$ if an edge exists between nodes $i$ and $j$, the eigenvector centrality of node $i$ satisfies

\begin{equation}
x_i = \frac{1}{\lambda} \sum\limits_{j \in V}a_{ij}x_j,
\end{equation}
for some $\lambda.$  If we define $x$ to be the vector of $x_i$'s, this relationship can be expressed as
\begin{eqnarray}
x=\frac{1}{\lambda}Ax, \, \textup{ or  } \,\,
Ax = \lambda x, 
\end{eqnarray}
which is the familiar equation relating $A$ with its eigenvalues and eigenvector.  The eigenvector centralities for the network are the entries of the eigenvector corresponding to the largest real eigenvalue.

\subsection{PageRank}
PageRank, introduced in \cite{Page98}, is computed for each node based on the PageRank of its neighbors.  Where $E$ is the set of undirected edges, $R_v,d_v$ is the PageRank and degree of $v$, and $c$ is a normalization constant, we have the relationship 
    $$R_v=c \cdot \sum_{v' | (v,v')\in E}\frac{R_{v'}}{d_{v'}}.$$  
An initial value for rank is entered for each node and the relationship is then computed iteratively until convergence is reached.  Intuitively, PageRank can be thought of as the importance of a node based on the importance of its neighbors.

\subsection{Neighborhood}

	The next centrality measure we consider is the ``neighborhood.''  Given a natural number $q$, the $q$-neighborhood of vertex $i$ is the number of nodes in the network that are distance $q$ or closer from node $i$.  For example, for $q=0$, this metric is $1$ for every node.  For $q=1$, this metric is identical to degree centrality of node $i$, since it is the number of nodes within a distance $1$ of $i$.  For $q=2$, this metric counts the number of nodes within a distance $2$ of $i$, so it counts $i$'s neighbors along with its neighbors' neighbors.  In our work, we computed neighborhoods using $q=2, 3, 5, 10$, and denoted these measures by $nghd2$, $nghd3$, $nghd5$, and $nghd10$, respectively.  We note that the work of \cite{chen12} develops a centrality measure with a similar intuition to the neighborhood and show it preforms well in identifying influential spreaders.

\subsection{The Imprecision Functions}

	We now define the imprecision functions from \cite{InfluentialSpreaders_2010} that are used to measure the effectiveness of a centrality measure in identifying influential spreaders. We also extend their definition for all centrality measures explored in this paper.  Let $N$ denote the number of nodes, and let $p$ be a real number between 0 and 100. The $pN/100$ highest efficiency spreaders, $\Upsilon_{eff}(p)$, are chosen based on number of nodes infected $M_i$ per node.  Similarly, a set $\Upsilon_{k_s}(p)$ is defined as the $pN/100$ predicted most efficient spreaders, chosen with priority to highest $k_s$ valued nodes.  Let
\begin{eqnarray}
M_{eff}(p) &= \sum\limits_{i \in \Upsilon_{eff}(p)}\frac{M_i}{pN}, \mathit{ and }\\
M_{k_s}(p) &= \sum\limits_{i \in \Upsilon_{k_s}(p)}\frac{M_i}{pN}. 
\end{eqnarray}
The imprecision function of $k_s$, $\epsilon_{k_s} (p)$, is defined as 
\begin{equation}
\epsilon_{k_s} (p)  = 1- \frac{M_{k_s}(p)}{M_{eff}(p)}
\end{equation}
     
Similarly, $\epsilon_{eig} (p)$ and $\epsilon_{deg} (p)$ are defined as
\begin{eqnarray}
        \epsilon_{eig}(p) &= 1- \frac{M_{eig}(p)}{M_{eff}(p)}, \\
        \epsilon_{deg}(p) &= 1- \frac{M_{deg}(p)}{M_{eff}(p)} 
\end{eqnarray}
In general, for any centrality measure $c$, the imprecision function $\epsilon_{c}(p)$ is defined as 
\begin{equation}     
     \epsilon_{c} (p)  = 1- \frac{M_{c}(p)}{M_{eff}(p)}
     \end{equation}
     
\section{Experimental Setup}\label{data}

In this section we describe our experimental setup and the datasets we used.  All simulation and centrality analysis was done in Version 2.14.1 of \verb'R' \cite{R}.  The operating system used was Windows Vista Enterprise (32 bit) and the computer had an Intel Core 2 Quad CPU (Q9650) 3.0 GHz with 4 GB of RAM. Run times to analyze the networks ranged from several hours for the small networks to several days for the larger ones. Centrality measures were computed using the \verb'igraph' \cite{igraph} package in \verb'R'.

We obtained our datasets from a variety of sources.  
Brief descriptions of these networks are as follows:


\begin{itemize}[leftmargin=0cm]
    \item[]{\textbf{cond-mat-GCC} is an academic collaboration network from the e-print arXiv and covers scientific collaborations between authors' papers submitted to Condensed Matter category from 1999~\cite{umich}.}
    \item[]{\textbf{ca-GrQc-GCC} is an academic collaboration network from the e-print arXiv and covers scientific collaborations between authors' papers submitted to the General Relativity and Quantum Cosmology category from Jan. 1993 - Apr. 2003~\cite{snap}.}
    \item[]{\textbf{urv-email} is an e-mail network based on communications of members of the University Rovira i Virgili (Tarragona) \cite{uvi}.  It was extracted in 2003.}
    \item[]{\textbf{1-edges-GCC} is a network formed from YouTube, the video-sharing website that allows users to establish friendship links~\cite{Zafarani+Liu:2009}.  The sample was extracted in Dec. 2008. Links represent two individuals sharing one or more subscriptions to channels on YouTube.}
    \item[]{\textbf{std-GCC} is an online sex community in Brazil in which links represent that one of the individuals posted online about a sexual experience with the other individual, resulting in a bipartite graph. The data was extracted from September of 2002 to October of 2008 \cite{rocha2010}.}
    \item[]{\textbf{as20000102} is a one day snapshot of Internet routers as constructed from the border gateway protocol logs \cite{snap}.  It was extracted on Jan 2nd, 2000.}
    \item[]{\textbf{oregon\_010331} is a network of Internet routers over a one week period as inferred from Oregon route-views, looking glass data, and routing registry from covering the week of March 3rd, 2001~\cite{snap}.}
    \item[]{\textbf{ca-HepTh-GCC} is a collaboration network from the e-print arXiv and covers scientific collaborations between authors' papers submitted to the High Energy Physics - Theory category.  It covers paper from Jan 1993 to Apr 2003 \cite{snap}.}
    \item[]{\textbf{as-22July06} is a snapshot of the Internet on 22 July 2006 at the autonomous systems level compiled by Mark Newman \cite{umich}.}
    \item[]{\textbf{netscience-GCC} is a network of coauthorship of scientists working on network theory and experiments compiled by Mark Newman in May 2006 \cite{umich}.}
\end{itemize}

All datasets used in this paper were obtained from one of four sources: the ASU Social Computing Data Repository~\cite{Zafarani+Liu:2009}, the Stanford Network Analysis Project~\cite{snap}, Mark Newman's data repository at the University of Michigan~\cite{umich}, and Universitat Rovira i Virgili~\cite{uvi}.  All networks considered were symmetric; i.e., if a directed edge from vertex $v$ to $v'$ exists, there is also an edge from vertex $v'$ to $v$.  Summary statistics for these networks can be found in Table \ref{NetStats}.

In the cases where the network had more than one component, we used only the greatest connected component.  We append the suffix ``-GCC'' when referring to those networks.  For example, the cond-mat network had more than one component, so we will use the greatest connected component and refer to this network as ``cond-mat-GCC''.

\begin{table*} 
{\small
\begin{tabular}{llrrrrcrrrr}
\toprule
Name & Type & Nodes & Edges & Density& $\beta'$ & $\lambda$ & $R^2$ & $\langle k \rangle$ &	$\langle k^2 \rangle $ & $K_S$\\
\midrule
\hline
1-edges-GCC & online & 13679 & 76741 & 0.0008 & 2.3 & 1.8 & 0.90 & 11.2 & 502.6 & 25\\
as20000102 & router & 6474 & 12572 & 0.0006 & 0.6 & 1.2 & 0.73 & 3.9 & 640.0 & 12\\
ca-GrQc-GCC & collab &  4158 & 13422 & 0.0016 & 6.3 & 2.0 & 0.88 & 5.5 & 93.2 & 43\\
cond-mat-GCC & collab & 13861 & 44619 & 0.0005 & 8.4 & 2.4 & 0.93 & 5.9 & 75.6 & 17\\
oregon2\_010331 & router & 10900 & 31180 & 0.0005 & 0.5 & 1.2 & 0.79 & 5.7 & 1188.8 & 31\\
std-GCC & std & 15810 & 38540 & 0.0003 & 3.7 & 1.9 & 0.92 & 4.7 & 130.9 & 11\\
urv-email & email & 1133 & 5451 & 0.0085 & 5.7 & 1.5 & 0.84 & 9.6 & 179.8 & 11\\
ca-HepTh-GCC & collab & 8638 & 24806 & 0.0007 & 8.3 & 2.2 & 0.90 & 5.7 & 74.6 &31\\
as-22July2006 & router & 22963 & 48436 & 0.0002 & 0.4 & 1.2 & 0.72 & 4.2 & 1103.0 & 25\\
netscience-GCC & collab & 379 & 914 & 0.0127 & 14.2 & 1.6 & 0.76 & 4.8 & 38.7 & 8 \\
\bottomrule
\end{tabular}
}
\caption{Network Summary Statistics. Note that $\beta'$ is the minimum threshold of infection rate for the epidemic to spread to a significant portion of the network, $\lambda$ is exponent of the power law of the degree distribution, $R^2$ is goodness of fit between the power law and the degree distribution, $\langle k \rangle$ and $\langle k^2 \rangle$ are the first and second moments of the degree distribution, and $K_S$ is the maximum shell present in the network.}
\label{NetStats}
\end{table*}

As seen in the Table 1, all networks used are approximately scale free.  This does not infer that they were generated using a preferential attachment model (as introduced in \cite{barabasi}), as many mechanisms can be responsible for generating scale free networks. If they were generated using a preferential attachment model then we would see a correlation between shell number and degree. This would also mean that degree centrality and shell number would have little difference in predicting spreaders, but our simulations show otherwise.  Figure~\ref{corevsdegree} shows an example in which degree and shell number are not correlated.

\begin{center}
\begin{figure} 
\centering
\includegraphics[width=0.8\linewidth]{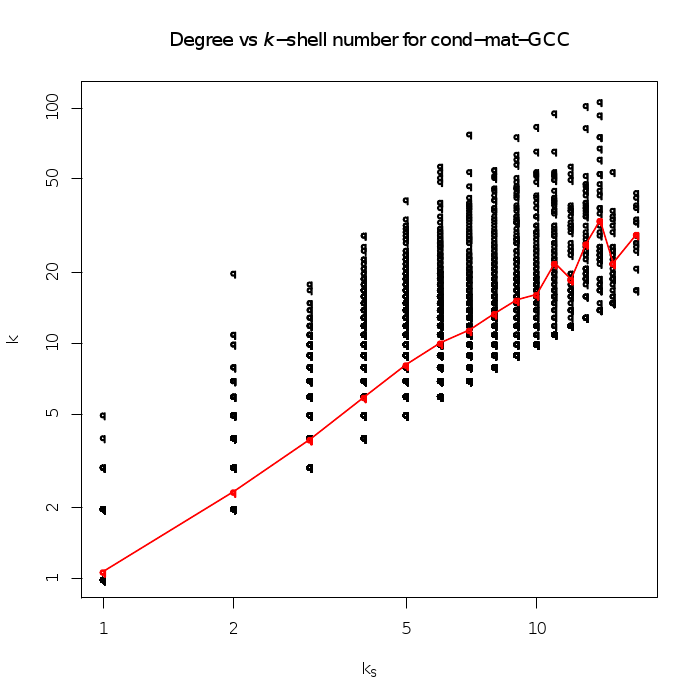} 
\caption{\textit{In the higher shells of these two examples, degree and shell number are not correlated, indicating these can not be assumed to be generated by preferential attachment models. The red line shows the average degree of each shell.  Note that log scales are being used on both axes.}}
\label{corevsdegree}
\end{figure}
\end{center}

\section{Results}\label{results}
Earlier we noted that (1) the relative performance of degree, shell number and other centrality measures can depend on the $\beta$ parameter of the SIR model, and (2) eigenvector centrality performs very well in general regardless of the value of $\beta$ used, typically outperforming all of the other centrality measures that we tried. Here we present more results illustrating these two points. Unless otherwise specified, the $\beta$ values that we used when plotting the imprecision function versus $\beta$ are $1.1\beta', 1.2\beta', \ldots, 2.0\beta',$ where $\beta'$ is the epidemic threshold for the network in question.

\subsection{Sensitivity to $\beta$}
    In Figures \ref{cond-mat-1} and \ref{cond-mat-2}, we saw that the performance of degree relative to shell number changes with $\beta$ for the cond-mat network.  For $\beta=11.17$, shell number was a better indicator of spreading, but for $\beta=15.95$, degree was better. Another way that we could depict this dependence on $\beta$ is to fix $p$ and plot the imprecision versus $\beta$, instead of fixing $\beta$ and plotting the imprecision versus $p$.  In Figure \ref{cond-mat-ineff-vs-beta}, we fix $p=5$ and plot the imprecision function of degree, shell number, and eigenvector centrality versus $\beta$, for $\beta$ between $11.17$ and $15.95$.
        \begin{figure}
        \begin{center}
        \includegraphics[width=0.8\linewidth]{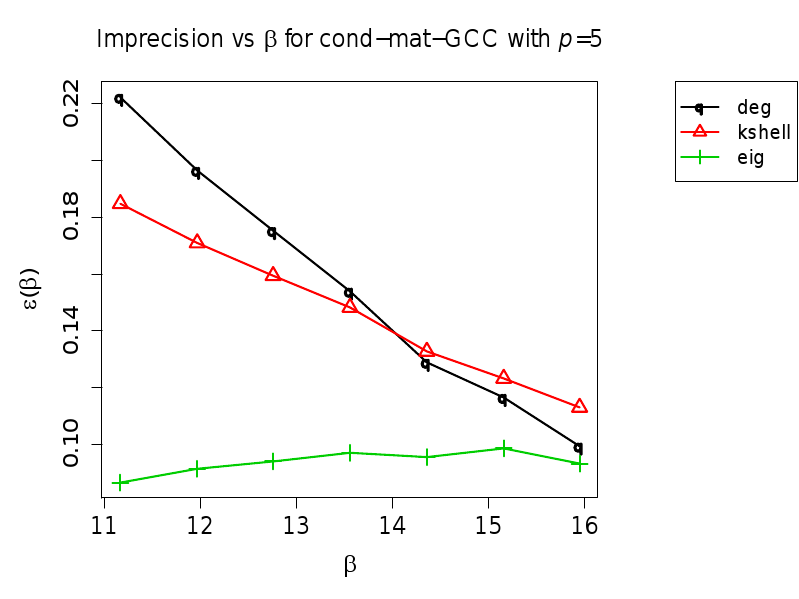}
        \caption{\textit{Imprecision vs $\beta$ for the cond-mat network.  The relative performance of degree and shell number changes near $\beta=14$. }}
        \label{cond-mat-ineff-vs-beta}
        \end{center}
        \end{figure}
    Notice that at around $\beta=14$, degree begins to outperform shell number.
    
    The relative performance of other centrality measures can change as well.  In Figure \ref{ca-GrQc-GCC-ineff-vs-beta}, we plot the imprecision functions of degree, shell number, eigenvector, and closeness centrality versus $\beta$ for $p=5$.  
        \begin{figure}
        \begin{center}
        \includegraphics[width=0.8\linewidth]{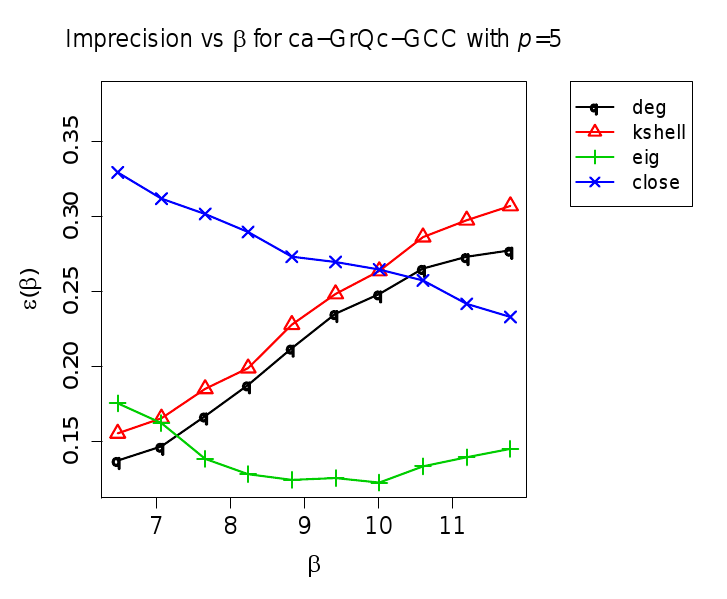}
        \caption{\textit{Imprecision vs $\beta$ for the ca-GrQc-GCC network. }}
        \label{ca-GrQc-GCC-ineff-vs-beta}
        \end{center}
        \end{figure}
    In this network, for $\beta$ near $\beta'$, degree and shell number perform very well.  However, as $\beta$ increases, the imprecision functions of those measures increase, and other measures, like closeness and eigenvector, outperform degree and shell number.

\subsection{Eigenvector centrality}
    As we saw in Figure \ref{eig-kshell}, eigenvector centrality outperforms shell number for all but one of the networks we examined.  Eigenvector centrality also typically outperforms all of the other centrality measures that we tried.  In Figure \ref{fig:imprec-cond-mat-GCC}, we plot the imprecision functions of several different centrality measures for the cond-mat network.  We see that eigenvector centrality performs best for this network. 
        \begin{figure}
        \begin{center}
        \includegraphics[width=.8\linewidth]{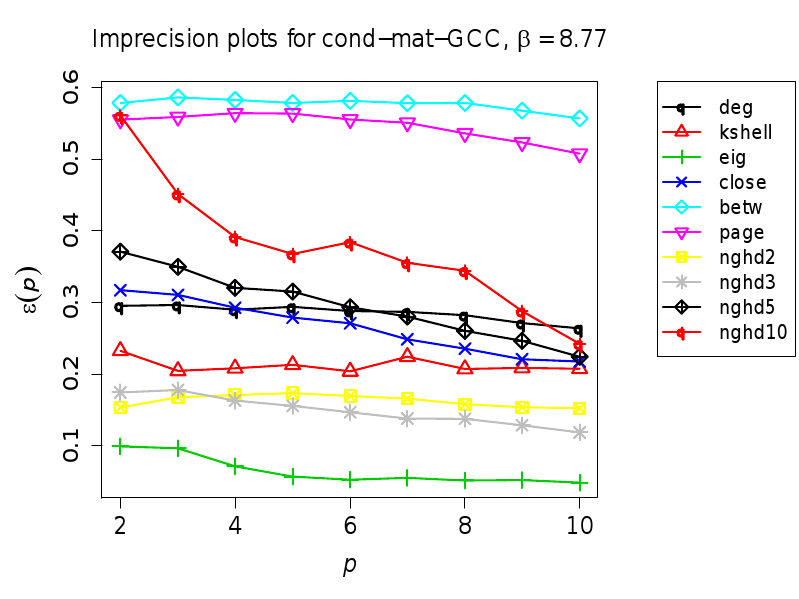}
        \caption{\textit{Imprecision vs $p$ for the cond-mat-GCC network with $\beta=1.1\beta'= 8.77$. We see that eigenvalue centrality performs best for this network.}}
        \label{fig:imprec-cond-mat-GCC}
        \end{center}
        \end{figure}
In Figures \ref{fig:imprec-netscience-GCC},   \ref{fig:imprec-1-edges-GCC}, \ref{fig:imprec-std-GCC}, and \ref{fig:imprec-urv-email}, we give an example of a collaboration network, an online network, an STD network, and an email network in which eigenvector performs best.  

\begin{figure}
\begin{center}
\includegraphics[width=.8\linewidth]{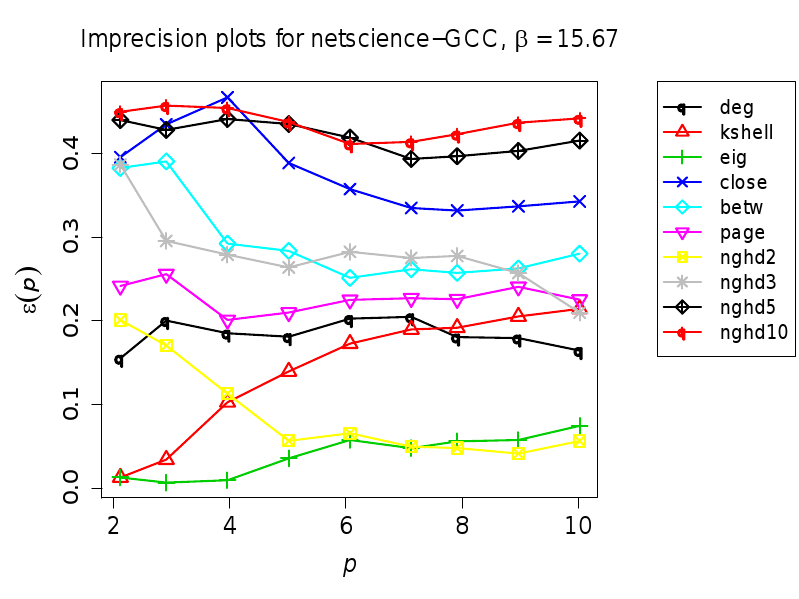}
\caption{\textit{Imprecision vs $p$ for the netscience-GCC network with $\beta=1.1\beta' =15.67$. We see that eigenvalue centrality performs best for this network.}}
\label{fig:imprec-netscience-GCC}
\end{center}
\end{figure}

\begin{figure}
\begin{center}
\includegraphics[width=.8\linewidth]{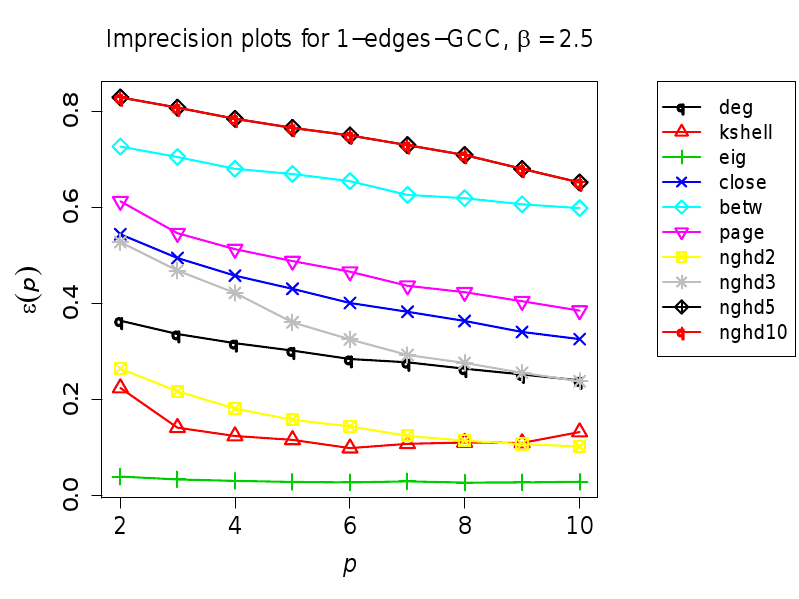}
\caption{\textit{Imprecision vs $p$ for the 1-edges-GCC network with $\beta=1.1\beta' =2.50$. We see that eigenvalue centrality performs best for this network.}}
\label{fig:imprec-1-edges-GCC}
\end{center}
\end{figure}

\begin{figure}
\begin{center}
\includegraphics[width=.8\linewidth]{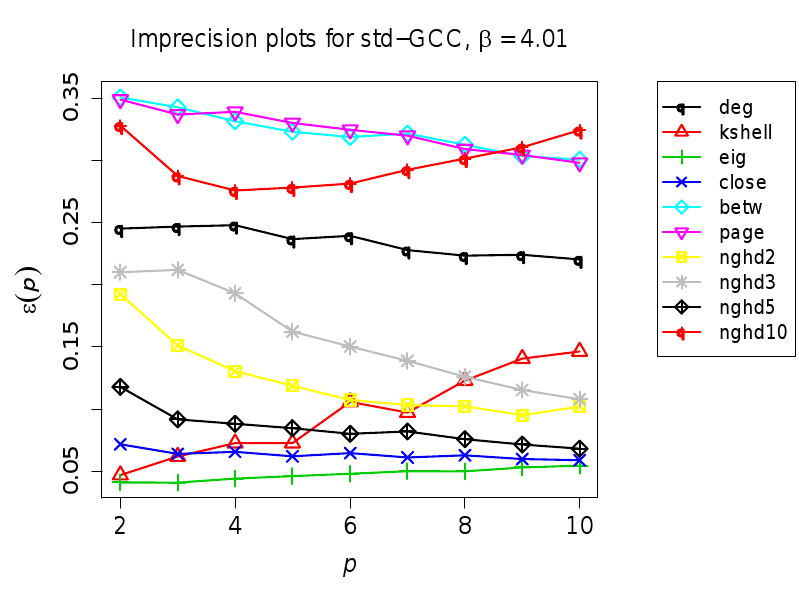}
\caption{\textit{Imprecision vs $p$ for the std-GCC network with $\beta=1.1\beta' =4.01$. We see that eigenvalue centrality performs best for this network.}}
\label{fig:imprec-std-GCC}
\end{center}
\end{figure}

\begin{figure}
\begin{center}
\includegraphics[width=.8\linewidth]{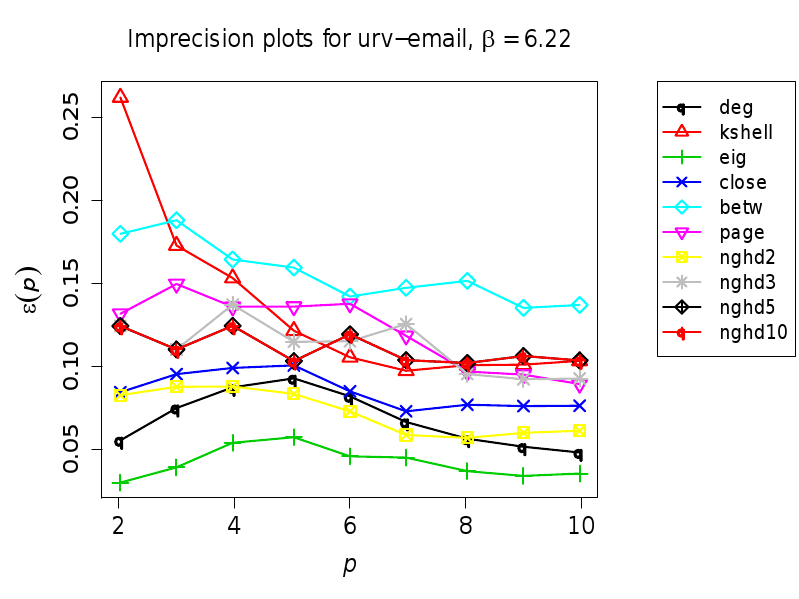}
\caption{\textit{Imprecision vs $p$ for the urv-email network with $\beta=1.1\beta' =6.22$. We see that eigenvalue centrality performs best for this network.}}
\label{fig:imprec-urv-email}
\end{center}
\end{figure}

    Eigenvector centrality did not outperform shell number for the ca-HepTh network, so we can not conclude that eigenvector centrality performs best for \textit{every} network that we tried.  However, it does seem that, \textit{on average}, for the networks we tried, eigenvector centrality performed best for $\beta = 1.1\beta', 1.2 \beta', ..., 2.0\beta'.$  Suppose we take the imprecision functions for $\beta = 1.1 \beta'$ for each network, and we average these imprecision functions over all of our networks, including the ca-HepTh network.   This would be one way to check how well each centrality measure performs on average.  In Figure \ref{avg-ineff-1-1}, we plot this the average imprecision versus $p$ for $\beta = 1.1 \beta'$. We see that, on average, eigenvector centrality outperforms the other measures.  The measure $nghd2$ performed well also.  We give similar figures for $\beta= 1.5\beta'$ and $\beta=2.0\beta'$ in Figures \ref{avg-ineff-1-5} and \ref{avg-ineff-2-0}.  In both cases, eigenvector centrality outperforms all of the other measures.
    
    We believe that eigenvector centrality performs well for some of the same reasons that shell number performs well.  A node has high eigenvector centrality when the node and its neighbors have high degree.  Nghd2, nghd3, and the closely related measure of \cite{chen12} also perform well for this reason.  A hub, or a node with high degree, in the periphery of a network, which does not have many neighbors with high degree, will not typically be as good of a spreader as a node with high eigenvector centrality.

        \begin{figure}
        \begin{center}
        \includegraphics[width=.8\linewidth]{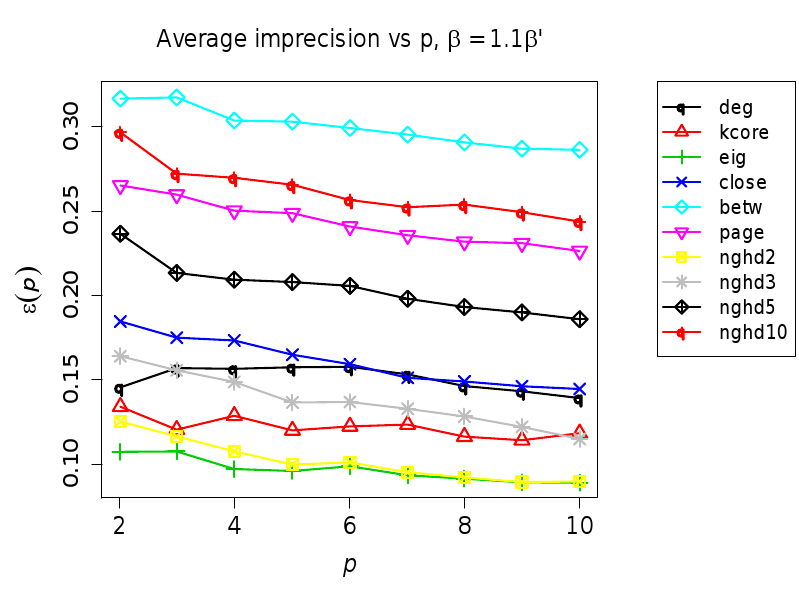}
        \caption{\textit{Average Imprecision vs $p$ with $\beta=1.1\beta'$, where the average is taken over all networks that we considered. }}
        \label{avg-ineff-1-1}
        \end{center}
        \end{figure}

\subsection{Large values of $\beta$}

In \cite{InfluentialSpreaders_2010}, only relatively small values for $\beta$ were explored as it was noted that larger values of $\beta$ would likely cause spreading to a large portion of the population regardless of the location of the initially infected node.  However, in the networks we studied, we found a difference in the ability of the starting node to spread even at seven times the epidemic threshold.  Further, the result that eigenvector centrality performs best, based on average imprecision over all the networks, still holds for these larger values of $\beta$.  We display our imprecision functions for larger values of $\beta$ in Figure~\ref{lg-avg}.  We also show that for five times the epidemic threshold, eigenvector centrality still outperforms the other centrality measures for different values of $p$ (Figure~\ref{lg-ineff-5}).

\begin{figure}
\begin{center}
    \includegraphics[width=.8\linewidth]{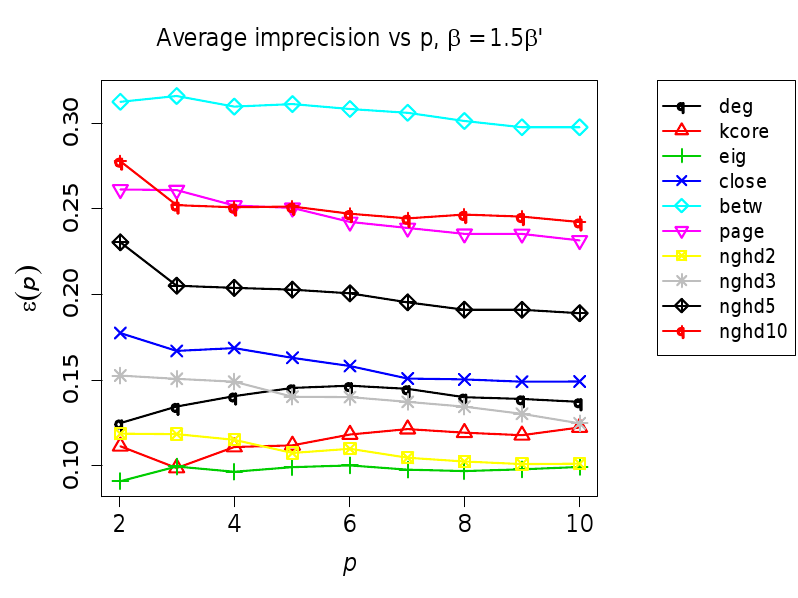}
    \caption{\textit{Average Imprecision vs $p$ with $\beta=1.5\beta'$, where the average is taken over all networks that we considered. We see that, on average, eigenvector performs best.}}
    \label{avg-ineff-1-5}
\end{center}
\end{figure}

\begin{figure}
\begin{center}
        \includegraphics[width=.8\linewidth]{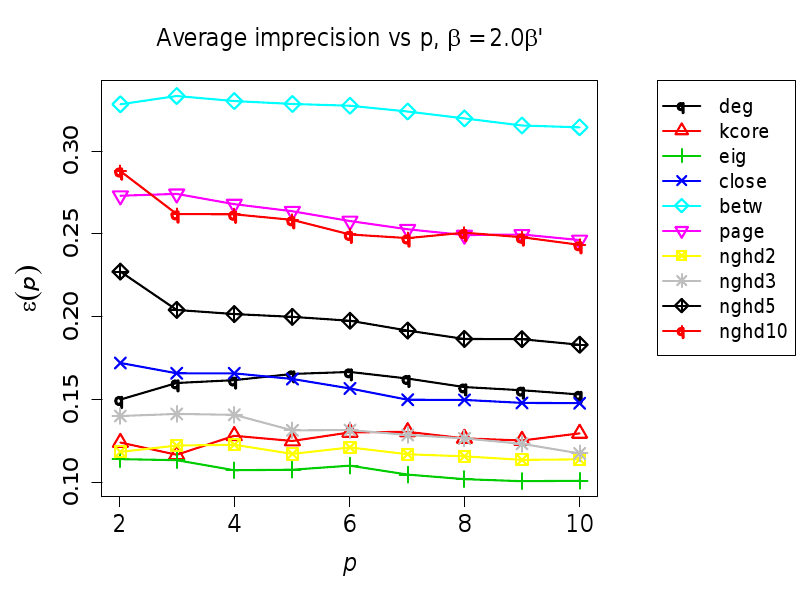}
        \caption{\textit{Average Imprecision vs $p$ with $\beta=2.0\beta'$, where the average is taken over all networks that we considered. We see that, on average, eigenvector performs best.}}
        \label{avg-ineff-2-0}
\end{center}
\end{figure}

\begin{figure}
\begin{center}
    \includegraphics[width=.8\linewidth]{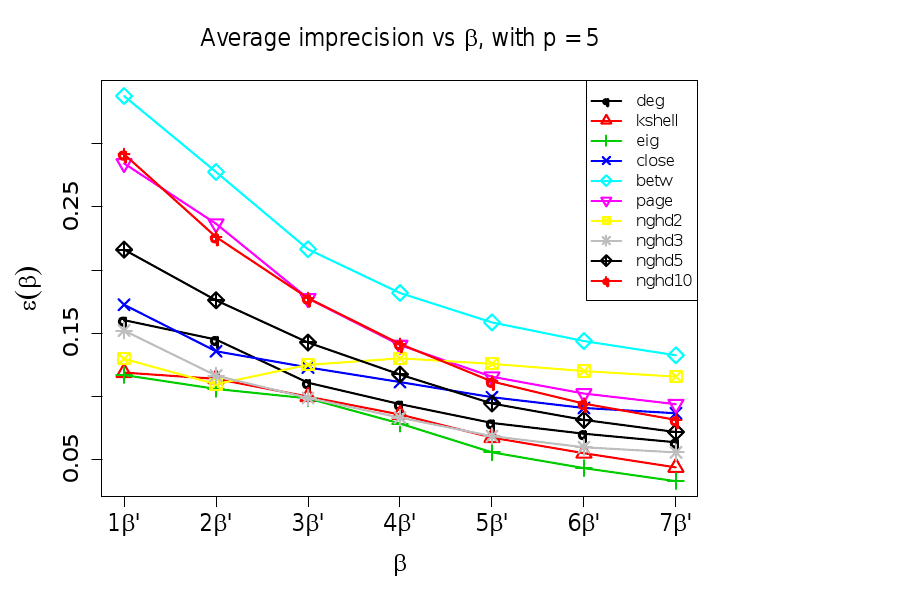}
    \caption{\textit{Average Imprecision vs. $\beta$ with $p=5$.  We see that, on average, eigenvector performs best.}}
    \label{lg-avg}
\end{center}
\end{figure}

\begin{figure}
\begin{center}
        \includegraphics[width=.8\linewidth]{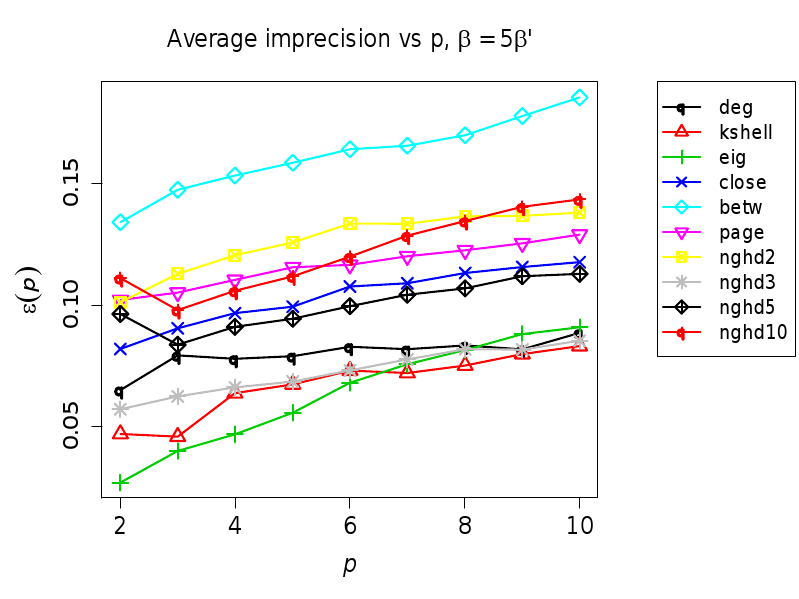}   
    \caption{\textit{Average Imprecision vs. $p$ with $\beta=5\beta'$, where the average is taken over all networks that we considered.  We see that, on average, eigenvector performs best.}}
        \label{lg-ineff-5}
\end{center}
\end{figure}

\FloatBarrier

\section{Conclusions and Future Work}
	These new experiments provide further insight into the issue of identifying spreaders in complex networks that was initiated by \cite{InfluentialSpreaders_2010}.  We extended their work by studying multiple values of the infection probability $\beta$ and showed that the relative ability for centrality measures to identify spreaders often depends on this parameter.  We also noted that eigenvector centrality consistently outperforms the other centrality measures, usually independent of $\beta$.  Future work on identifying influential spreaders could include identifying nodes that not only cause significant spreading, but do so quickly, thus accounting for the time it takes for individuals in the population to become infected.  Further, it would be also interesting to examine which centrality measures best identify spreaders in non-monotonic models of diffusion processes, such as the voter model.  Another aspect for future work would be to examine group centrality. In other words, one could use a centrality measure on sets of nodes to identify the best set of spreaders under the SIR model \cite{spreaders2}.  Finally, it is also worth empirically studying centrality measures designed specifically for the SIR model or other diffusion process, as described in recent work such as \cite{klemm12} and \cite{subrah12}. However, we note that one key advantage to the approach taken in this paper is that the centrality measures studied are already well established - and hence common in many software tools for complex network analysis.
	
	\bigskip
	
\FloatBarrier

\vfill

\section*{Acknowledgments}
Some of the authors of this paper are supported under OSD project F1AF262025G001 and ARO project 2GDATXR042.  The authors are very thankful to these organizations for their support.

We would like to thank Jon Bentley for his feedback on an earlier version of this paper.

The views expressed in this article are those of the authors and do not reflect the official policy or position of the United States Military Academy, the Department of the Army, the Department of Defense, the United States Government, or any of the listed funding agencies.


\end{document}